\newtheorem{theorem}{Theorem}
\newtheorem{proposition}[theorem]{Proposition}
\newtheorem{remark}[theorem]{Remark}
\newenvironment{proof}[1][Proof]{\noindent\textbf{#1.} }{\ \rule{0.5em}{0.5em}}
\begin{document}

\title{On Bers generating functions for first order systems of mathematical physics}
\author{Vladislav V. Kravchenko$^{1}$, Marco P. \ Ramirez T.$^{2}$\\$^{1}${\small Departamento de Matem\'{a}ticas, CINVESTAV del IPN, Unidad
Quer\'{e}taro,}\\{\small Libramiento Norponiente No. 2000, Fracc. Real de Juriquilla, }\\{\small Quer\'{e}taro, Qro. C.P. 76230 MEXICO e-mail:
vkravchenko@qro.cinvestav.mx}\\$^{2}${\small Escuela de Ingenier\'{\i}a de la Universidad La Salle,}\\{\small \ Benjam\'{\i}n Franklin No. 47, Col. Condesa, C.P. 06140, M\'{e}xico
D.F.}\\{\small \ e-mail: mramirez@lci.ulsa.mx}}
\maketitle

\begin{abstract}
Considering one of the fundamental notions of Bers' theory of pseudoanalytic
functions the generating pair via an intertwining relation we introduce its
generalization for biquaternionic equations corresponding to different
first-order systems of mathematical physics with variable coefficients. We
show that the knowledge of a generating set of solutions of a system allows
one to obtain its different form analogous to the complex equation describing
pseudoanalytic functions of the second kind and opens the way for new results
and applications of pseudoanalytic function theory. As one of the examples the
Maxwell system for an inhomogeneous medium is considered, and as one of the
consequences of the introduced approach we find a relation between the
time-dependent one-dimensional Maxwell system and hyperbolic pseudoanalytic
functions and obtain an infinite system of solutions of the Maxwell system.
Other considered examples are the system describing force-free magnetic fields
and the Dirac system from relativistic quantum mechanics.

\end{abstract}

\section{Introduction}

Bers' theory of pseudoanalytic functions mainly created in fifties of the last
century \cite{bers} offers interesting and still not fully explored tools for
studying and solving linear elliptic equations in the plane. Recent advances
in the theory and its applications \cite{APFT} show that some abstract
constructions proposed by Bers can be made completely explicit and applicable
to important equations of mathematical physics. For example, pseudoanalytic
formal powers introduced and studied by Bers and later on by\ many other
mathematicians had been obtained only in some very special situations which
represented a considerable obstacle for a further development of
pseudoanalytic function theory. This obstacle has been substantially
diminished in a recent work reported in \cite{APFT} due to the fact that there
has been found a method for constructing formal powers explicitly in a much
more general situation. Moreover, it was shown that this construction offers a
tool for calculating explicitly complete systems of solutions of linear
elliptic second-order equations in the plane. This progress together with some
other recent developments posed more open questions concerning pseudoanalytic
function theory, its generalizations and applications. For example, a
possibility to develop a hyperbolic pseudoanalytic function theory with
applications to hyperbolic equations of mathematical physics was explored in
\cite{KRT} (see also \cite{APFT} and \cite{KKT}). In the present work we use
some of the results of \cite{KRT} for obtaining an infinite system of
solutions of the one-dimensional time-dependent Maxwell system. Another
related important open question is the development of pseudoanalytic function
theory in higher dimensions. This was analized in a number of papers (see
\cite{malonek}, \cite{KrJPhys06}, \cite{berglez} and \cite{APFT}), and it is
clear that in this direction the development is barely starting.

The aim of the present paper is to show that some fundamental ideas of
pseudoanalytic function theory are valid in a general situation and applicable
to linear systems of mathematical physics both elliptic and hyperbolic. The
starting point of Bers' theory is the concept of a generating pair which in a
sense means a substitution of a pair of \textquotedblleft rectilinear
elements\textquotedblright\ of the plane $1$ and $i$ by a pair of quite
arbitrary \textquotedblleft curvilinear elements\textquotedblright\ -- a pair
of complex functions $F$ and $G$ which only should enjoy the property of
independence in the sense that any complex function $w$ can be represented in
the form $w=\varphi F+\psi G$ where $\varphi$ and $\psi$ are real-valued
functions. Beginning with generalizations of the first definitions from
analytic function theory like the derivative, Bers shows that behind a
generating pair there is always a corresponding generalized Cauchy-Riemann
system which is usually called the Vekua or Carleman-Vekua equation. The
knowledge of a generating pair for a Vekua equation allows one to represent it
in another form which is the equation for pseudoanalytic functions of the
second kind. This form is very convenient for introducing a simple formula for
calculation of the $(F,G)$-derivative in the sense of Bers and of the
corresponding antiderivative and in fact represents a cornerstone of all
further constructions of pseudoanalytic function theory including formal powers.

In this work we show that the concept of a generating set of functions in the
sense of Bers is much more universal and can be introduced in relation with
first-order systems of mathematical physics with the aid of hypercomplex
algebraic tools. This allows one to obtain another form of a corresponding
system analogous to that describing pseudoanalytic functions of the second
kind. In order to generalize the concept of a generating pair it resulted to
be fruitful to develop a slightly different approach to this concept via a
certain intertwining relation. We consider it in the next section. Another
tool implemented in this paper is the algebra of biquaternions. The related
notations are introduced also in section \ref{SectPrelim}.

We consider only few examples in this paper chosen in such a way that from one
side it becomes clear that our approach is general and is applicable to a wide
variety of systems and from the other it can be seen that for each particular
physically meaningful system some special interesting phenomena may occur.
Thus, we consider the time-dependent Maxwell system for inhomogeneous media
(section 3), the system describing force-free magnetic fields (section 4) and
the Dirac equation (section 5). In the case of the Maxwell system we introduce
a generating set of solutions, with its aid we obtain the Maxwell system in
the form of an equation for pseudoanalytic functions of the second kind. This
allows us to find a relation of the Maxwell system in the one-dimensional case
with the hyperbolic pseudoanalytic function theory developed in \cite{KRT}. As
a direct consequence of this relation we obtain an infinite system of
solutions of the Maxwell equations.

In the case of force-free magnetic fields we show that in fact one exact
solution is sufficient to obtain a corresponding generating quartet and to
write down the second-kind equation. This interesting phenomenon leads to an
observation regarding the quotients of solutions. Namely, we obtain a
differential equation satisfied by the quotients of solutions. This result
seems to be new even in the case of monogenic functions (a special case when
the proportionality factor in the considered system vanishes identically).
Finally, in section 5 we show that the concept of a generating set is
applicable to the Dirac system with electromagnetic and scalar potentials, and
in this case as well the system can be written in another form corresponding
to pseudoanalytic functions of the second kind.

\section{Preliminaries\label{SectPrelim}}

\subsection{Biquaternions}

We will denote the algebra of biquaternions or complex quaternions by
$\mathbb{H(C)}$ with the standard basic quaternionic units denoted by
$e_{0}=1$, $e_{1},e_{2}$ and $e_{3}$. The complex imaginary unit is denoted by
$i$ as usual. The set of purely vectorial quaternions $q=\mathbf{q}$ is
identified with the set of three-dimensional vectors.

The quaternionic conjugation of a biquaternion $q=q_{0}+\mathbf{q}$ will be
denoted as%
\[
\overline{q}=q_{0}-\mathbf{q},
\]
and by $q^{\ast}$ we denote the complex conjugation of $q$,%
\[
q^{\ast}=\operatorname{Re}q-i\operatorname{Im}q.
\]

Sometimes the following notation for the operator of multiplication from the
right-hand side will be used
\[
M^{p}q=q\cdot p.
\]

The main quaternionic differential operator introduced by Hamilton himself and
sometimes called the Moisil-Theodoresco operator is defined on continuously
differentiable biquaternion-valued functions of the real variables $x_{1}$,
$x_{2}$ and $x_{3}$ according to the rule%
\[
Dq=\sum_{k=1}^{3}e_{k}\partial_{k}q,
\]
where $\partial_{k}=\frac{\partial}{\partial x_{k}}$.\ 

\subsection{Pseudoanalytic functions}

In this subsection we introduce some basic concepts from Bers' theory of
pseudoanalytic functions \cite{bers} and give a slightly varied interpretation
of the notion of a generating pair. Precisely this different interpretation
allows us to introduce the generating sets for the considered in the
subsequent sections systems of mathematical physics.

According to \cite{bers} a pair of arbitrary continuously differentiable with
respect to the real variables $x$ and $y$ complex-valued functions $F$ and $G$
satisfying the inequality%
\begin{equation}
\operatorname{Im}(\overline{F}G)>0 \label{ps01}%
\end{equation}
in a domain $\Omega\subset\mathbb{C}$ is called a generating pair. This
inequality means that $F$ and $G$ are independent in the sense that any
complex function $W$ defined in $\Omega$ can be expressed in the form%
\[
W=\varphi F+\psi G
\]
where $\varphi$ and $\psi$ are real-valued functions.

For a fixed generating pair and in the case when $\varphi$ and $\psi$ are
continuously differentiable one can define the $(F,G)$-derivative of the
function $W$ in the following way%
\[
\mathring{W}=F~\partial_{z}\varphi+G~\partial_{z}\psi,
\]
where $\partial_{z}=\frac{1}{2}(\partial_{x}-i\partial_{y})$. The derivative
$\mathring{W}$ exists iff the equality%
\begin{equation}
F~\partial_{\overline{z}}\varphi+G~\partial_{\overline{z}}\psi=0\label{ps02}%
\end{equation}
holds. Here $\partial_{\overline{z}}=\frac{1}{2}(\partial_{x}+i\partial_{y})$.

Introducing the notation%
\[
a=-\frac{\overline{F}\partial_{\overline{z}}G-\overline{G}\partial
_{\overline{z}}F}{F\overline{G}-\overline{F}G},\text{ \ \ }b=\frac
{F\partial_{\overline{z}}G-G\partial_{\overline{z}}F}{F\overline{G}%
-\overline{F}G},
\]
equation (\ref{ps02}) can be written in the form of a Vekua equation%
\begin{equation}
\partial_{\overline{z}}W-aW-b\overline{W}=0. \label{ps03}%
\end{equation}
Functions $a$ and $b$ are known as characteristic coefficients of the
generating pair\emph{ }$\left(  F,G\right)  $ and solutions of (\ref{ps03})
are known as pseudoanalytic functions, or more exactly $\left(  F,G\right)
$-pseudoanalytic functions of the first kind. Solutions of the corresponding
equation (\ref{ps02}) regarded as complex-valued functions $w=\varphi+i\psi$
are called $\left(  F,G\right)  $-pseudoanalytic functions of the second kind.
Note that by construction both generating functions $F$ and $G$ are $\left(
F,G\right)  $-pseudoanalytic of the first kind.

Now let us consider the operator from Vekua equation (\ref{ps03}),
$\partial_{\overline{z}}-a-bC$ where by $C$ we denote the operator of complex
conjugation. Take an arbitrary real-valued function $\varphi$ and consider the
equality
\begin{equation}
\left(  \partial_{\overline{z}}-a-bC\right)  \left(  \varphi f\right)
=f\partial_{\overline{z}}\varphi\label{ps06}%
\end{equation}
where $f$ is some complex function. It is easy to see that this equality holds
for any real-valued $\varphi$ iff $f$ is a particular solution of
(\ref{ps03}). In order to be able to consider a general solution of
(\ref{ps03}) another particular solution, say, $g$ is needed. In this case we
can look for solutions of (\ref{ps03}) in the form $W=\varphi f+\psi g$ where
$\varphi$ and $\psi$ are real-valued functions and $f$ and $g$ are particular
solutions of (\ref{ps03}) if only $f$ and $g$ are independent in the sense
explained above. Thus we arrive at the concept of a generating pair for a
Vekua equation via the intertwining relation (\ref{ps06}). We have then
\[
\left(  \partial_{\overline{z}}-a-bC\right)  \left(  \varphi f+\psi g\right)
=f\partial_{\overline{z}}\varphi+g\partial_{\overline{z}}\psi
\]
which in particular gives us a relation between equations (\ref{ps03}) and
(\ref{ps02}).

\section{Generating sets of solutions for the Maxwell system in inhomogeneous
media}

Let us consider the Maxwell equations for inhomogeneous media%
\begin{equation}
\operatorname{rot}\mathbf{H}=\varepsilon\partial_{t}\mathbf{E}+\mathbf{j,}%
\label{Min1}%
\end{equation}%
\begin{equation}
\operatorname{rot}\mathbf{E}=-\mu\partial_{t}\mathbf{H},\label{Min2}%
\end{equation}%
\begin{equation}
\operatorname{div}(\varepsilon\mathbf{E})=\mathbf{\rho},\label{Min3}%
\end{equation}%
\begin{equation}
\operatorname{div}\mathbf{(}\mu\mathbf{H})=0.\label{Min4}%
\end{equation}
\ Here $\varepsilon$ and $\mu$ are real-valued functions of coordinates,
$\mathbf{E}$ and $\mathbf{H}$ are real-valued vector fields depending on $t$
and spatial variables, the real-valued scalar function $\mathbf{\rho}$ and the
real vector function $\mathbf{j}$ characterize the distribution of sources of
the electromagnetic field.

The wave propagation velocity will be denoted by $c=\frac{1}{\sqrt
{\varepsilon\mu}}$, the refraction index by $n=\sqrt{\varepsilon\mu}$ and the
intrinsic impedance of the medium by $Z=\sqrt{\frac{\mu}{\varepsilon}}$. As
was shown in \cite{k-inhom}, \cite{kbook}, introducing the notations%
\[
\mathbf{c}=\frac{\operatorname*{grad}\sqrt{c}}{\sqrt{c}},\text{ \ }%
\mathbf{Z}=\frac{\operatorname*{grad}\sqrt{Z}}{\sqrt{Z}}\quad\text{and
}\mathbf{V}=\sqrt{\varepsilon}\mathbf{E}+i\sqrt{\mu}\mathbf{H}%
\]
one can rewrite system (\ref{Min1})-(\ref{Min4}) in the form of a single
biquaternionic equation%
\[
(\frac{1}{c}\partial_{t}+iD)\mathbf{V}-M^{i\mathbf{c}}\mathbf{V}%
-M^{i\mathbf{Z}}\mathbf{V}^{\ast}=-(\sqrt{\mu}\mathbf{j}+\frac{i\mathbf{\rho}%
}{\sqrt{\varepsilon}})
\]
which in a sourceless situation becomes%
\begin{equation}
(\frac{1}{c}\partial_{t}+iD)\mathbf{V}-M^{i\mathbf{c}}\mathbf{V}%
-M^{i\mathbf{Z}}\mathbf{V}^{\ast}=0.\label{MaxMain}%
\end{equation}
Let $\varphi$ be a real-valued function. Then it is easy to see that the
equality
\begin{equation}
(\frac{1}{c}\partial_{t}+iD-M^{i\mathbf{c}}-M^{i\mathbf{Z}}C)[\varphi
\mathbf{V}]=(\frac{1}{c}\partial_{t}+iD)[\varphi]\cdot\mathbf{V}%
\label{Mintertwine}%
\end{equation}
holds iff $\mathbf{V}$ is a solution of (\ref{MaxMain}).

Assume that $\left\{  \mathbf{V}_{1},\ldots,\mathbf{V}_{6}\right\}  $ are
solutions of (\ref{MaxMain}) independent in the sense that for any complex
vector function $\mathbf{V}$ there exist real valued functions $\varphi_{k}$,
$k=1,2,\ldots,6$ such that $\mathbf{V=}\sum_{k=1}^{6}\varphi_{k}\mathbf{V}%
_{k}$. This can be easily written as a condition on a corresponding
determinant of a matrix formed by components of $\mathbf{V}_{k}$. Then due to
(\ref{Mintertwine}) we have that $\mathbf{V=}\sum_{k=1}^{6}\varphi
_{k}\mathbf{V}_{k}$ is a solution of (\ref{MaxMain}) if and only if
\begin{equation}
\sum_{k=1}^{6}(\frac{1}{c}\partial_{t}+iD)[\varphi_{k}]\cdot\mathbf{V}_{k}=0.
\label{MaxSecKind}%
\end{equation}
This equation for real-valued functions $\varphi_{k}$, $k=1,2,\ldots,6$ is a
Bers' equation for Maxwell pseudoanalytic functions of the second kind.

Notice that in a frequently encountered in practice case $\mu
=\operatorname*{Const}$ we have that $\mathbf{c}=\mathbf{Z}$ and
(\ref{MaxMain}) turns into the equation
\begin{equation}
(\frac{1}{c}\partial_{t}+iD)\mathbf{V}-i(\mathbf{V}+\mathbf{V}^{\ast
})\mathbf{c}=0, \label{MaxNonMagn}%
\end{equation}
for which it is easy to propose a triplet of independent solutions in the form%
\[
\mathbf{V}_{4}=ie_{1},\quad\mathbf{V}_{5}=ie_{2},\quad\mathbf{V}_{6}=ie_{3}%
\]
corresponding to a constant magnetic field. Thus, in order to rewrite
Maxwell's system in the form (\ref{MaxSecKind}) it is sufficient to find
another triplet of solutions. In some cases this can be done relatively easy.
Let us consider a stratified medium, that is $\varepsilon=\varepsilon(x)$ and
hence $\mathbf{c}=c_{1}(x)e_{1}=\frac{c\prime(x)}{2c(x)}e_{1}$. Then the
vectors
\[
\mathbf{V}_{1}=ce_{1},\quad\mathbf{V}_{2}=\frac{e_{2}}{c},\quad\text{and
}\mathbf{V}_{3}=\frac{e_{3}}{c}%
\]
are solutions of (\ref{MaxNonMagn}). Consequently, the Maxwell system in this
case is equivalent to equation (\ref{MaxSecKind}) (for Maxwell pseudoanalytic
functions of the second kind):%
\[
(\frac{1}{c}\partial_{t}+iD)\varphi_{1}\cdot ce_{1}+\sum_{k=2}^{3}(\frac{1}%
{c}\partial_{t}+iD)\varphi_{k}\cdot\frac{e_{k}}{c}+\sum_{k=4}^{6}(\frac{1}%
{c}\partial_{t}+iD)\varphi_{k}\cdot ie_{k-3}=0.
\]
For a better understanding of this equation as well as of equation
(\ref{MaxNonMagn}) let us consider solutions depending on $t$ and $x$ only:%
\[
(\frac{1}{c(x)}\partial_{t}+ie_{1}\partial_{x})\mathbf{V}(t,x)-i(\mathbf{V}%
(t,x)+\mathbf{V}^{\ast}(t,x))c_{1}(x)e_{1}=0.
\]
One can observe that equations for $V_{1}$ and those for $V_{2}$, $V_{3}$ are
not coupled. We have
\begin{equation}
(\frac{1}{c}\partial_{t}+ie_{1}\partial_{x})V_{1}-i(V_{1}+V_{1}^{\ast}%
)c_{1}e_{1}=0 \label{MaxOne1}%
\end{equation}
and
\begin{equation}
(\frac{1}{c}\partial_{t}+ie_{1}\partial_{x})(V_{2}e_{2}+V_{3}e_{3}%
)-i((V_{2}+V_{2}^{\ast})e_{2}+(V_{3}+V_{3}^{\ast})e_{3})c_{1}e_{1}=0.
\label{MaxOne2}%
\end{equation}
The first of these equations can be easily solved. Note that $\partial
_{t}V_{1}\equiv0$ and $\partial_{x}V_{1}-2c_{1}\operatorname*{Re}V_{1}=0$.
That is $\operatorname*{Im}V_{1}\equiv\operatorname*{Const}$ and $\partial
_{x}\operatorname*{Re}V_{1}-\frac{c\prime(x)}{c(x)}\operatorname*{Re}V_{1}=0$
which gives us a general form of the component $V_{1}$ in the case under
consideration, $V_{1}=a_{1}c(x)+ia_{2}$ where $a_{1}$ and $a_{2}$ are
arbitrary real constants.

Now let us consider equation (\ref{MaxOne2}). It can be written as the
following bicomplex equation
\[
(\frac{1}{c}\partial_{t}+ie_{1}\partial_{x})\Phi+ic_{1}e_{1}(\Phi+\Phi^{\ast
})=0
\]
for the bicomplex function $\Phi=V_{2}+V_{3}e_{1}$.

Denote by $N$ an antiderivative of the refraction index $n$ and consider the
following change of the variable $x\mapsto\xi=N(x)$. Then the function
$\Psi(t,\xi(x))=\Phi(t,x)$ as a function of the variables $t$ and $\xi$
satisfies the following equation%
\[
(\partial_{t}+ie_{1}\partial_{\xi})\Psi(t,\xi)+ie_{1}\frac{C^{\prime}(\xi
)}{2C(\xi)}(\Psi(t,\xi)+\Psi^{\ast}(t,\xi))=0
\]
where $C(\xi(x))=c(x)$. Note that introducing a new unity $j=ie_{1}$ which
obviously satisfies the equality $j^{2}=1$ and considering the bicomplex
function in the form $\Psi=\Psi_{1}+\Psi_{2}j$ where $\Psi_{1,2}%
=u_{1,2}+v_{1,2}e_{1}$ with $u_{1,2}$ and $v_{1,2}$ being real valued
functions we can rewrite the last equation as follows
\[
(\partial_{t}+j\partial_{\xi})\Psi(t,\xi)+j\frac{C^{\prime}(\xi)}{2C(\xi
)}(\Psi(t,\xi)+\Psi^{\ast}(t,\xi))=0
\]
where $\Psi^{\ast}=\Psi_{1}-\Psi_{2}j$. Finally, applying the conjugation
operator and introducing the new bicomplex function $W=\sqrt{C}\Psi^{\ast}$ we
arrive at the equation
\[
\frac{1}{2}(\partial_{\xi}-j\partial_{t})W-\frac{f^{\prime}(\xi)}{2f(\xi
)}W^{\ast}=0
\]
where $f=\sqrt{C}$, which can be written in the form of a hypebolic Vekua
equation
\begin{equation}
\partial_{\overline{z}}W-\frac{f_{\overline{z}}}{f}\overline{W}%
=0\label{BicompW}%
\end{equation}
where $\partial_{\overline{z}}=\frac{1}{2}(\partial_{\xi}-j\partial_{t})$ and
instead of an asterisk we used bar for denoting the same conjugation with
respect to $j$. This equation in the case when $W$ has values in the algebra
of hyperbolic numbers which we denote as $\mathcal{H}$, that is when
$W=W_{1}+W_{2}j$ with $W_{1}$ and $W_{2}$ being real valued, was introduced
and studied in \cite{KRT} and \cite{APFT} in relation to the Klein-Gordon
equation and in \cite{KKT} in relation to the Zakharov-Shabat system. Notice
that due to the fact that $f$ is real valued, equation (\ref{BicompW}) in fact
consists of two separate equations for $\mathcal{H}$-valued functions, that is
it reduces to a pair of equations which were considered in the previous
publications \cite{KKT}, \cite{APFT} and \cite{KRT}. In order to observe this
one needs to write the function $W$ in the form $W=w_{1}+w_{2}e_{1}$ where
$w_{1}$ and $w_{2}$ are $\mathcal{H}$-valued. Then equation (\ref{BicompW}) is
equivalent to the following pair of separate equations%
\begin{equation}
\partial_{\overline{z}}w_{1}-\frac{f_{\overline{z}}}{f}\overline{w_{1}}%
=0\quad\text{and\quad}\partial_{\overline{z}}w_{2}-\frac{f_{\overline{z}}}%
{f}\overline{w_{2}}=0.\label{w1w2}%
\end{equation}
Now gathering all the introduced transformations we have that
\[
w_{1}(t,\xi)=\sqrt{C(\xi)}\left(  \sqrt{\widetilde{\varepsilon}(\xi
)}\widetilde{E}_{2}(t,\xi)-\sqrt{\mu}\widetilde{H}_{3}(t,\xi)j\right)
\]
and
\[
w_{2}(t,\xi)=\sqrt{C(\xi)}\left(  \sqrt{\widetilde{\varepsilon}(\xi
)}\widetilde{E}_{3}(t,\xi)+\sqrt{\mu}\widetilde{H}_{2}(t,\xi)j\right)
\]
where tilde means that the corresponding original function was written as a
function of $\xi$, e.g., $\widetilde{\varepsilon}(\xi(x))=\varepsilon(x)$.

Thus, all theory developed in \cite{KRT} and \cite{APFT} is applicable in this
case to the hyperbolic pseudoanalytic functions $w_{1}$ and $w_{2}$. As an
interesting application let us mention a possibility to construct an infinite
system of exact solutions of (\ref{BicompW}) using the results from \cite{KRT}
and \cite{APFT} combined with the elegant formulas obtained by Bers and
Gelbart in the elliptic case \cite{bers}. Namely, we notice that the pair of
functions
\begin{equation}
(F,G)=(f,j/f)\label{GenPair}%
\end{equation}
is a generating pair for both equations (\ref{w1w2}) and moreover it has a
form convenient for applying the results of Bers and Gelbart. Following
\cite{bers} (see \cite[Sect. 4.2]{APFT} for the slightly corrected formulas),
we give explicit formulas for the formal powers corresponding to the
generating pair (\ref{GenPair}) with $f$ depending on one Cartesian variable
$\xi$. For simplicity we assume that $z_{0}=0$ and $F(0)=1$. In this case the
formal powers are constructed in an elegant manner as follows. First, denote%

\[
X^{(0)}(\xi)=\widetilde{X}^{(0)}(\xi)=1
\]
and for $n=1,2,...$denote%

\[
X^{(n)}(\xi)=\left\{
\begin{tabular}
[c]{ll}%
$n%
{\displaystyle\int\limits_{0}^{\xi}}
X^{(n-1)}(x)\frac{dx}{f^{2}(x)}$ & $\text{for an odd }n$\\
$n%
{\displaystyle\int\limits_{0}^{\xi}}
X^{(n-1)}(x)f^{2}(x)dx$ & $\text{for an even }n$%
\end{tabular}
\ \right.
\]

\[
\widetilde{X}^{(n)}(\xi)=\left\{
\begin{tabular}
[c]{ll}%
$n%
{\displaystyle\int\limits_{0}^{\xi}}
\widetilde{X}^{(n-1)}(x)f^{2}(x)dx$ & $\text{for an odd }n$\\
$n%
{\displaystyle\int\limits_{0}^{\xi}}
\widetilde{X}^{(n-1)}(x)\frac{dx}{f^{2}(x)}$ & $\text{for an even }n$%
\end{tabular}
\ \ \right.
\]

Then for $a=a^{\prime}+ja^{\prime\prime}$, $a^{\prime},a^{\prime\prime}%
\in\mathbb{R}$ and $z=\xi+tj$ we have%
\[
Z^{(n)}(a,0,z)=f(\xi)\operatorname*{Re}\,_{\ast}Z_{{}}^{(n)}(a,0,z)+\frac
{j}{f(\xi)}\operatorname*{Im}\,_{\ast}Z_{{}}^{(n)}(a,0,z)
\]
where%

\begin{align}
_{\ast}Z^{(n)}(a,0,z)  &  =a^{\prime}%
{\displaystyle\sum\limits_{m=0}^{n}}
\binom{n}{m}X^{(n-m)}j^{m}t^{m}\text{\ }\label{Znodd}\\
&  +ja^{\prime\prime}%
{\displaystyle\sum\limits_{m=0}^{n}}
\binom{n}{m}\widetilde{X}^{(n-m)}j^{m}t^{m}\text{\ \ \ }\quad\text{for an odd
}n\nonumber
\end{align}
and%

\begin{align}
_{\ast}Z^{(n)}(a,0,z)  &  =a^{\prime}%
{\displaystyle\sum\limits_{m=0}^{n}}
\binom{n}{m}\widetilde{X}^{(n-m)}j^{m}t^{m}\text{\ }\label{Zneven}\\
&  +ja^{\prime\prime}%
{\displaystyle\sum\limits_{j=0}^{n}}
\binom{n}{m}X^{(n-m)}j^{m}t^{m}\text{\ \ \ \ }\quad\text{for an even
}n.\nonumber
\end{align}
For any\ $a\in\mathcal{H}$ and $n\in\mathbb{N}$ the formal power
$Z^{(n)}(a,0,z)$ is a solution of (\ref{w1w2}). Thus, the system of
constructed formal powers gives an infinite system of solutions of the Maxwell
equations in the case under consideration. Up to now it is an open question
what part of the kernel of (\ref{w1w2}) and consequently of the Maxwell system
(\ref{MaxOne2}) can be approximated by the obtained exact solutions.

\section{Generating solution for force-free magnetic fields}

The system describing force-free magnetic fields has the form%
\begin{gather}
\operatorname*{rot}\mathbf{B}+\alpha\mathbf{B}=0,\label{ff01}\\
\operatorname*{div}\mathbf{B}=0,\nonumber
\end{gather}
where $\mathbf{B}$ is a vector describing the magnetic field and the commonly
known as proportionality factor $\alpha$ is a real-valued function of spatial
variables. This system is of great importance in such fields as high
temperature superconductors and dynamics of magnetofluids (see e.g.
\cite{kamien}, \cite{marsh}, \cite{rcg}). It has been analized by different
methods including those of quaternionic analysis (see \cite{g-s},
\cite{KSbook} for the theory in the case when $\alpha$ is a constant and
\cite{kbook}, \cite{k-ff} and \cite{rgm-ff} for some developments in a
nonconstant case).

Obviously system (\ref{ff01}) can be written in the form of a quaternionic
equation%
\[
\left(  D+\alpha\right)  \mathbf{B}=0
\]
which is completely equivalent to (\ref{ff01}). In what follows we do not
restrict ourselves by purely vectorial null solutions of the operator
$D+\alpha$ and consider the equation
\begin{equation}
\left(  D+\alpha\right)  B=0\label{ff02}%
\end{equation}
on the class of continuously differentiable $\mathbb{H}(\mathbb{C})$-valued
functions $B$ with the proportionality factor $\alpha$ being a complex-valued
function. Applying the same idea as before we find out that for any scalar
(complex-valued) function $\varphi\in C^{1}(\Omega)$ where $\Omega$ is some
domain in $\mathbb{R}^{3}$ the equality
\begin{equation}
\left(  D+\alpha\right)  \left(  \varphi b\right)  =\left(  D\varphi\right)
b\label{ff03}%
\end{equation}
holds with $b$ being an $\mathbb{H}(\mathbb{C})$-valued function iff $b$ is a
solution of (\ref{ff02}). Thus, chosing four independent solutions of
(\ref{ff02}) $b_{k}$, $k=\overline{0,3}$ we can look for a general solution of
(\ref{ff02}) in the form $B=\sum_{k=0}^{3}\varphi_{k}b_{k}$ where the scalar
functions $\varphi_{k}$ should satisfy the equation
\[
\sum_{k=0}^{3}D\varphi_{k}\cdot b_{k}=0
\]
which is an analogue of the Bers equation for pseudoanalytic functions of the
second kind.

Nevertheless one can notice that in the case of equation (\ref{ff02}) in fact
it is sufficient to have only one particular solution $b$. As $b\lambda$ with
$\lambda$ being a constant biquaternion is also a solution of (\ref{ff02}) the
generating quartet $b_{k}$, $k=\overline{0,3}$ can be proposed in the form
$b_{k}=be_{k}$ if only $b$ is invertible in any point of the domain of
interest $\Omega$.

Thus, we obtain the following statement.

\begin{proposition}
Let $b$ be a particular solution of (\ref{ff02}), invertible in any point of
$\Omega\subset\mathbb{R}^{3}$. Then the general solution of (\ref{ff02}) can
be represented as the product $B=b\Phi$ where the components of the
$\mathbb{H}(\mathbb{C})$-valued function $\Phi$ satisfy the equation
\begin{equation}
\sum_{k=0}^{3}D\Phi_{k}\cdot be_{k}=0. \label{ff04}%
\end{equation}

\end{proposition}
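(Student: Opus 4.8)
The plan is to reduce the statement to a direct application of the intertwining relation (\ref{ff03}), so that invertibility of $b$ does all the real work. First I would record the observation already present in the text preceding the statement: since $b$ solves (\ref{ff02}) and each unit $e_{k}$ is a constant biquaternion, the right translate $b_{k}:=be_{k}$ again solves (\ref{ff02}). Indeed $D(be_{k})=(Db)e_{k}$, because $D$ multiplies by $e_{1},e_{2},e_{3}$ from the left whereas $e_{k}$ multiplies from the right, so that $(D+\alpha)(be_{k})=((D+\alpha)b)e_{k}=0$. Because $b$ is invertible at every point of $\Omega$ and $\{e_{0},e_{1},e_{2},e_{3}\}$ is a basis of $\mathbb{H}(\mathbb{C})$, the quartet $\{b_{0},\ldots,b_{3}\}$ is independent in the required sense: for any $\mathbb{H}(\mathbb{C})$-valued $B$ the function $\Phi:=b^{-1}B$ is well defined, and writing $\Phi=\sum_{k=0}^{3}\Phi_{k}e_{k}$ gives $B=b\Phi=\sum_{k=0}^{3}\Phi_{k}\,be_{k}=\sum_{k=0}^{3}\Phi_{k}b_{k}$. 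Thus every $B$ admits the representation $B=b\Phi$, and this is exactly the point at which invertibility enters.

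The core of the argument is then the computation of $(D+\alpha)B$. Applying $D+\alpha$ to $B=\sum_{k}\Phi_{k}b_{k}$ and using linearity, I would invoke (\ref{ff03}) term by term: each $b_{k}$ is a solution of (\ref{ff02}) and each component $\Phi_{k}$ is a scalar (complex-valued) function, so $(D+\alpha)(\Phi_{k}b_{k})=(D\Phi_{k})b_{k}=D\Phi_{k}\cdot be_{k}$. Summing yields
\[
(D+\alpha)B=\sum_{k=0}^{3}D\Phi_{k}\cdot be_{k}.
\]
Consequently $B=b\Phi$ solves (\ref{ff02}) if and only if the right-hand side vanishes, which is precisely (\ref{ff04}). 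Combined with the representation step, this shows that the solutions of (\ref{ff02}) are exactly the products $b\Phi$ whose factor $\Phi$ has components satisfying (\ref{ff04}), which is the claim.

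I do not expect a serious obstacle: the statement is essentially a repackaging of (\ref{ff03}) once the generating quartet $b_{k}=be_{k}$ has been identified. The two points deserving care are both tied to the invertibility hypothesis. First, one must know that $\Phi=b^{-1}B$ is again continuously differentiable so that $D\Phi_{k}$ makes sense; this follows from the explicit formula $b^{-1}=\overline{b}/(b\overline{b})$ together with the fact that the scalar $b\overline{b}=b_{0}^{2}+b_{1}^{2}+b_{2}^{2}+b_{3}^{2}$ is nowhere zero, which is exactly what invertibility of $b$ at each point means. Second, one should note that the correspondence $\Phi\mapsto b\Phi$ is a genuine bijection on $C^{1}$ functions, so that no solution is missed and the representation is unambiguous; this too is immediate from pointwise left multiplication by the invertible $b$. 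The essential compatibility making the whole scheme work is that the components $\Phi_{k}$ are scalar, which is precisely the hypothesis under which (\ref{ff03}) applies.
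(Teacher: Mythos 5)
Your proof is correct and takes essentially the same route as the paper: the key step in both is the Leibniz-type computation $(D+\alpha)(b\Phi)=\sum_{k=0}^{3}D\Phi_{k}\cdot be_{k}+\sum_{k=0}^{3}\Phi_{k}\left((D+\alpha)b\right)e_{k}$, with the second sum vanishing because $b$ solves (\ref{ff02}); your invocation of (\ref{ff03}) term by term is just this computation repackaged. The additional remarks on the independence of the quartet $be_{k}$ and the regularity of $b^{-1}=\overline{b}/(b\overline{b})$ are correct supplementary detail that the paper handles in the text preceding the proposition.
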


\begin{proof}
Under the conditions of the proposition let us consider the function
$B=b\Phi=\sum_{k=0}^{3}b\Phi_{k}e_{k}$. Now application of the operator
$D+\alpha$ to $B$ gives $(D+\alpha)B=\sum_{k=0}^{3}D\Phi_{k}\cdot be_{k}%
+\sum_{k=0}^{3}\Phi_{k}\cdot\left(  (D+\alpha)b\right)  e_{k}$ from where it
is seen that $B$ is a solution of (\ref{ff02}) iff (\ref{ff04}) is valid.
\end{proof}

\begin{remark}
It should be noticed that the last proposition is also valid when $\alpha$ is
a full biquaternionic function.
\end{remark}

From the above proposition the following interesting fact about the quotients
of solutions of (\ref{ff02}) follows.

\begin{proposition}
\label{PropQuotient}Let $f$ and $g$ be solutions of (\ref{ff02}) and $f$ be
invertible in the domain of interest. Then the function $\Phi=f^{-1}g$ is a
solution of the equation
\begin{equation}
\sum_{k=0}^{3}D\Phi_{k}\cdot fe_{k}=0 \label{ff05}%
\end{equation}
and if $g$ is also invertible the inverse $\Psi=\Phi^{-1}=g^{-1}f$ is a
solution of the equation%
\begin{equation}
\sum_{k=0}^{3}D\Psi_{k}\cdot ge_{k}=0. \label{ff06}%
\end{equation}

\end{proposition}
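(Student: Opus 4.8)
The plan is to read off both assertions directly from the preceding proposition, or more precisely from the Leibniz-type identity established in its proof. First I would record that identity in the form needed here: for any invertible $C^{1}$ solution $h$ of (\ref{ff02}) and any $\mathbb{H}(\mathbb{C})$-valued $\Phi=\sum_{k=0}^{3}\Phi_{k}e_{k}$ with (complex) scalar components $\Phi_{k}$, the product rule in $\mathbb{H}(\mathbb{C})$ gives
\[
(D+\alpha)(h\Phi)=\sum_{k=0}^{3}(D\Phi_{k})\cdot he_{k}+\sum_{k=0}^{3}\Phi_{k}\cdot\big((D+\alpha)h\big)e_{k}.
\]
Because $h$ solves (\ref{ff02}), the last sum vanishes, so $(D+\alpha)(h\Phi)=\sum_{k=0}^{3}(D\Phi_{k})\cdot he_{k}$. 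Hence $h\Phi$ is itself a solution of (\ref{ff02}) \emph{if and only if} $\sum_{k=0}^{3}(D\Phi_{k})\cdot he_{k}=0$.

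For the first claim I would take $h=f$ and $\Phi=f^{-1}g$, which is legitimate since $f$ is invertible and of class $C^{1}$, so $f^{-1}$ and thus $\Phi$ are continuously differentiable. Then $h\Phi=f(f^{-1}g)=g$, which is a solution of (\ref{ff02}) by hypothesis. Applying the ``only if'' direction of the displayed equivalence yields $\sum_{k=0}^{3}(D\Phi_{k})\cdot fe_{k}=0$, i.e.\ equation (\ref{ff05}).

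The second claim follows by symmetry: I would interchange the roles of $f$ and $g$, now using that $g$ is invertible, and set $h=g$, $\Psi=g^{-1}f$. Then $h\Psi=g(g^{-1}f)=f$ is a solution of (\ref{ff02}), and the same equivalence gives $\sum_{k=0}^{3}(D\Psi_{k})\cdot ge_{k}=0$, which is (\ref{ff06}). Finally I would note $\Psi=g^{-1}f=(f^{-1}g)^{-1}=\Phi^{-1}$, identifying $\Psi$ with the inverse of $\Phi$ as stated.

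There is no serious obstacle here; the argument is essentially a reinterpretation of the previous proposition applied to the two invertible solutions $f$ and $g$. The only points requiring a moment's care are purely technical: that invertibility together with $C^{1}$-regularity of $f$ (resp.\ $g$) guarantees $f^{-1}$ (resp.\ $g^{-1}$) is again $C^{1}$, so that the components $\Phi_{k}$ (resp.\ $\Psi_{k}$) are differentiable, and that the scalar factors $\Phi_{k}$ and $\alpha$ commute out of the product exactly as in the preceding proof, so that the Leibniz identity above is valid in $\mathbb{H}(\mathbb{C})$.
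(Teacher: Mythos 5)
Your argument is correct and is essentially the paper's own proof: both reduce the statement to the preceding proposition (equivalently, to the Leibniz identity $(D+\alpha)(h\Phi)=\sum_{k=0}^{3}(D\Phi_{k})\cdot he_{k}+\sum_{k=0}^{3}\Phi_{k}\cdot\left((D+\alpha)h\right)e_{k}$ established in its proof), applied first with $h=f$, $\Phi=f^{-1}g$ and then, using invertibility of $g$, with the roles of $f$ and $g$ interchanged. The only difference is that you spell out the intermediate identity and the regularity of $f^{-1}$ explicitly, which the paper leaves implicit.
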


\begin{proof}
According to the previous proposition if $g$ is a solution of (\ref{ff02}) it
can be represented in the form $g=f\Phi$ where the components of the
biquaternionic function $\Phi=f^{-1}g$ are solutions of (\ref{ff05}). Now
noticing that if $g$ is invertible as well the function $\Phi$ is also
invertible and hence consideration of $f=g\Phi^{-1}$ leads to equation
(\ref{ff06}).
\end{proof}

\begin{remark}
This proposition is of course also valid in the special case $\alpha\equiv0$
corresponding to quaternionic monogenic or hyperholomorphic\ functions. It is
well known that in general a quotient of two monogenic functions must not be
monogenic. Proposition \ref{PropQuotient} gives us a precise equation
satisfied by the quotient.
\end{remark}

\section{The Dirac equation}

We consider here the Dirac equation for one spin 1/2 particle under the
influence of an electromagnetic potential, but in fact the procedure is
applicable to other kinds of physical potentials (scalar, pseudoscalar, etc.).
The Dirac equation has the form
\begin{equation}
\left(  \gamma_{0}\partial_{t}-\sum_{k=1}^{3}\gamma_{k}\partial_{k}%
+im+i\mathbf{\phi}\gamma_{0}+i\sum_{k=1}^{3}A_{k}\gamma_{k}\right)  \Psi=0,
\label{d01}%
\end{equation}
where $\gamma_{0},\gamma_{1},\gamma_{2}$ and $\gamma_{3}$ are the Dirac
$\gamma$-matrices (see, e.g., \cite{Thaller}), $m$ is the mass of the
particle, $\phi$ is the electric potential and $A_{1},A_{2}$ and $A_{3}$ are
components of the magnetic potential $\overrightarrow{A}$. The wave function
$\Psi$ is a $\mathbb{C}^{4}$-vector function $\Psi=\left(  \Psi_{0},\Psi
_{1},\Psi_{2},\Psi_{3}\right)  $. Equation (\ref{d01}) is considered in some
domain $\mathcal{G}\subset\mathbb{R}^{4}$.

As was shown in \cite{vlad-bag} (see also \cite{KSbook} and \cite{kbook}) the
Dirac equation (\ref{d01}) can be written in the following quaternionic form%
\[
(D-\partial_{t}M^{e_{1}}+\mathbf{a}-M^{i\mathbf{\phi}e_{1}+me_{2}})\Phi=0
\]
where the purely vectorial quaternion $\mathbf{a}$ is obtained from the
magnetic potential $\overrightarrow{A}$, and the $\mathbb{H}(\mathbb{C}%
)$-valued function $\Phi$ is related to $\Psi$ by an invertible matrix
transformation (see \cite{KSbook}, \cite{kbook} and \cite{APFT}). It is worth
mentioning that this form of the Dirac equation was recently rediscovered in
\cite{Schwartz}.

Consideration of solutions of \ (\ref{d01}) with fixed energy $\Psi
(t,\mathbf{x})=\Psi_{\omega}(\mathbf{x})e^{i\omega t}$ leads to the
biquaternionic equation
\begin{equation}
(D+\mathbf{a}+M^{\mathbf{b}})W=0 \label{d04}%
\end{equation}
where $W$ is an $\mathbb{H}(\mathbb{C})$-valued function of three spatial
variables and $\mathbf{b=}-i(\mathbf{\phi}+\omega)e_{1}-me_{2}$.

\begin{proposition}
Let the biquaternionic functions $F_{0},F_{1},F_{2}$ and $F_{3}$ be
independent\footnote{In the sense that any $\mathbb{H}(\mathbb{C})$-valued
function $W$ can be written as $W=\sum_{k=0}^{3}\varphi_{k}F_{k}$ with
$\varphi_{k}$ being scalar functions.} solutions of (\ref{d04}). Then
$W=\sum_{k=0}^{3}\varphi_{k}F_{k}$ where $\varphi_{0},\varphi_{1},\varphi_{2}$
and $\varphi_{3}$ are scalar functions is a solution of (\ref{d04}) if and
only if the following equation is satisfied
\begin{equation}
\sum_{k=0}^{3}(D\varphi_{k})F_{k}=0. \label{d05}%
\end{equation}

\end{proposition}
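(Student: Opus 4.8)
The plan is to follow exactly the pattern used for the force-free system in (\ref{ff03}) and for the Maxwell system in (\ref{Mintertwine}): first isolate the underlying intertwining relation for a single generating solution, then sum over the quartet.

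First I would verify the Leibniz-type identity
\[
(D + \mathbf{a} + M^{\mathbf{b}})(\varphi F) = (D\varphi)F + \varphi(D + \mathbf{a} + M^{\mathbf{b}})F
\]
for an arbitrary scalar function $\varphi$ and an arbitrary $\mathbb{H}(\mathbb{C})$-valued $F$. The only real computation is the product rule for the Moisil-Theodoresco operator: since $Dq=\sum_{k=1}^{3}e_{k}\partial_{k}q$ and each $\partial_{k}\varphi$ is scalar, one gets $D(\varphi F)=\sum_{k}e_{k}(\partial_{k}\varphi)F+\varphi\sum_{k}e_{k}\partial_{k}F=(D\varphi)F+\varphi DF$, where the scalar $\partial_{k}\varphi$ has simply been moved past $e_{k}$. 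The remaining two terms are immediate: $\mathbf{a}(\varphi F)=\varphi\mathbf{a}F$ because $\mathbf{a}$ acts by left multiplication and $\varphi$ is scalar, and $M^{\mathbf{b}}(\varphi F)=(\varphi F)\mathbf{b}=\varphi M^{\mathbf{b}}F$ because $M^{\mathbf{b}}$ is right multiplication. In particular, when $F$ solves (\ref{d04}) the bracketed factor vanishes and one obtains the intertwining relation $(D+\mathbf{a}+M^{\mathbf{b}})(\varphi F)=(D\varphi)F$, the exact analogue of (\ref{ff03}).

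Next I would apply this to $W=\sum_{k=0}^{3}\varphi_{k}F_{k}$. By linearity of the operator together with the identity above,
\[
(D+\mathbf{a}+M^{\mathbf{b}})W=\sum_{k=0}^{3}(D\varphi_{k})F_{k}+\sum_{k=0}^{3}\varphi_{k}(D+\mathbf{a}+M^{\mathbf{b}})F_{k}.
\]
Since each $F_{k}$ is assumed to solve (\ref{d04}), every summand in the second sum vanishes, leaving $(D+\mathbf{a}+M^{\mathbf{b}})W=\sum_{k=0}^{3}(D\varphi_{k})F_{k}$. The equivalence is then read off directly: $W$ solves (\ref{d04}) if and only if this right-hand side is zero, which is precisely equation (\ref{d05}).

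I do not expect a genuine obstacle. The only points requiring care are bookkeeping ones: keeping the noncommutative ordering straight (the vector $D\varphi_{k}$ multiplies $F_{k}$ from the left, while $\mathbf{b}$ always acts from the right through $M^{\mathbf{b}}$), and observing that scalar functions commute with all quaternionic units, so they may be pulled freely through both the left action of $\mathbf{a}$ and the right action of $M^{\mathbf{b}}$. Note that the independence hypothesis on $F_{0},\ldots,F_{3}$ is not actually needed for the stated equivalence; it serves only to guarantee that the ansatz $W=\sum_{k}\varphi_{k}F_{k}$ is general enough to capture every $\mathbb{H}(\mathbb{C})$-valued solution, and hence that (\ref{d05}) is genuinely a reformulation of (\ref{d04}) as a second-kind equation rather than a restriction to a subclass of solutions.
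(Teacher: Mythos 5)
Your proof is correct and is essentially the argument the paper intends: the paper states this proposition without an explicit proof, but the Leibniz-rule computation you give is exactly the one it writes out for the analogous force-free proposition, where $(D+\alpha)(b\Phi)=\sum_{k}D\Phi_{k}\cdot be_{k}+\sum_{k}\Phi_{k}((D+\alpha)b)e_{k}$. Your added remarks — that scalars commute with the quaternionic units so they pass through both the left action of $\mathbf{a}$ and the right action of $M^{\mathbf{b}}$, and that independence is only needed for the representation $W=\sum_{k}\varphi_{k}F_{k}$ to be exhaustive rather than for the stated equivalence — are accurate.
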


Obviously, equation (\ref{d05}) is an analogue of equation (\ref{ps02})
describing pseudoanalytic functions of the second kind.

\section{Conclusions}

In the present paper it was shown that the concept of a generating pair is not
limited to the classical pseudoanalytic function theory and can be introduced
in relation with a variety of first-order systems of mathematical physics.
Here we considered the Maxwell system for inhomogeneous media, the system
describing force-free magnetic fields and the Dirac system from relativistic
quantum mechanics. Nevertheless the approach presented in this paper and based
on the consideration of a generating solution as an intertwining operator is
clearly more general. The knowledge of a generating set of solutions makes it
possible to rewrite the original system in a form analogous to the Vekua
equation for pseudoanalytic functions of the second kind which opens the way
to construction of classes of solutions as was shown in section 3 for the
Maxwell system and perhaps more important to the development of Bers' theory
for the corresponding system of mathematical physics including such concepts
as the derivative, antiderivative, formal powers, Taylor and Laurent series, etc.

\end{document}